\documentclass[12pt, draftclsnofoot, onecolumn]{IEEEtran}

\usepackage{epsfig}
\usepackage{latexsym}
\usepackage{amsfonts,amsmath,color,amssymb,amsxtra, graphicx, times}
\usepackage[ansinew]{inputenc}
\usepackage{subfigure}
\usepackage{multirow}
\usepackage{algorithm,algorithmic}
\usepackage{slashbox}

\date{}
\begin{document}

\setlength{\textheight}{19.79205cm}
\setlength{\textwidth}{6.5in}
\setlength{\topmargin}{1.5cm}

\newcommand{\edgesin}{\Gamma_{I}(v)}
\newcommand{\edgesout}{\Gamma_{O}(v)}
\newcommand{\indegree}{\delta_{I}(v)}
\newcommand{\outdegree}{\delta_{O}(v)}
\newcommand{\field}{{\mathbb F}_{q}}

\newcommand{\setKeepers}{\mathcal{N}}           
\newcommand{\nrDataKeepers}{|\setKeepers|}  
\newcommand{\setSrcKeepers}{\setKeepers_s}  
\newcommand{\nrSrcKeepers}{|\setSrcKeepers|}  
\newcommand{\setTargetKeepers}{\setKeepers_r} 
\newcommand{\nrTargetKeepers}{|\setTargetKeepers|} 
\newcommand{\node}{n}  
\newcommand{\probB}{p_{b}}  
\newcommand{\setInformedNodes}[1]{N({#1})}  
\newcommand{\setKeepersList}{D} 
\newcommand{\nrKeepersList}{d} 
\newcommand{\byzNode}[1]{I_b({#1)=1}} 
\newcommand{\nonByzNode}[1]{I_b({#1})=0}
\newcommand{\contNode}[2]{{I_{c}({#1,#2})=1}} 
\newcommand{\unContNode}[2]{{I_{c}({#1,#2})=0}} 
\newcommand{\setContNodes}{N_c} 
\newcommand{\nrCont}[1]{C({#1})} 
\newcommand{\nrUnCont}[1]{\overline{C}({#1})}   
\newcommand{\setUnContNodes}{N_{\overline{c}}} 
\newcommand{\nrInfoByzantine}{N_{b}}
\newcommand{\blockProb}{\Psi}
\newcommand{\matrixP}{\mathbf{P}}
\newcommand{\markovC}{\Xi}
\newcommand{\eventA}{A}
\newcommand{\eventB}{D}
\newcommand{\setInformedNodesTrack}[1]{\setInformedNodes{#1}^{\textrm{Tracker}}}
\newcommand{\graph}[1]{G(#1)}

\newcommand{\VONEJSAC}[1]{#1}   
\newcommand{\VTWOJSAC}[1]{}

\newcommand{\comment}[1]{}
\newcommand{\fig}[1]{{\itshape Figure~\ref{#1}}}
\newcommand{\alg}[1]{{\itshape Algorithm~\ref{#1}}}
\newcommand{\sect}[1]{{\itshape Section~\ref{#1}}}
\newcommand{\tabl}[1]{{\itshape Table~\ref{#1}}}
\newcommand{\eq}[1]{\eqref{#1}}
\newcommand{\secref}[1]{Section~\ref{#1}}
\newcommand{\eg}{e.g.~}
\newcommand{\theorref}[1]{{\itshape Theorem~\ref{#1}}}
\newcommand{\cororref}[1]{{\itshape Corollary~\ref{#1}}}
\newcommand{\proprref}[1]{{\itshape Proposition~\ref{#1}}}
\newcommand{\lemmarref}[1]{{\itshape Lemma~\ref{#1}}}
\newcommand{\equationrref}[1]{{\itshape Equation~\ref{#1}}}
\newcommand{\algrref}[1]{{\itshape Algorithm~\ref{#1}}}
\newcommand{\boxend}{\hfill$\blacksquare$} 
\newtheorem{theorem}{Theorem}
\newtheorem{lemma}{Lemma}
\newtheorem{proposition}{Proposition}
\newtheorem{corollary}{Corollary}
\newtheorem{example}{Example}
\newtheorem{definition}{Definition}
\newtheorem{remark}{Remark}
\newcommand{\imp}[1]{\mathsf{imp}({#1}) }
\newcommand{\ie}{{\it i.e. }}
\newcommand{\etal}{\textit{et al. }}

\newcommand{\donotinclude}[1]{#1}  

\newcommand{\changed}[1]{{\color{blue} #1}}
\newcommand{\review}[1]{{\color{magenta} #1}}
\newcommand{\todo}[1]{{\color{red}TODO: #1}}
\newcommand{\alert}[1]{\textcolor{green}{#1}}

\title{\huge{On Counteracting Byzantine Attacks \\in Network Coded Peer-to-Peer Networks}}

\author{\vspace*{.5cm}MinJi Kim$^*$, Lu\'isa Lima$^*$, Fang Zhao$^*$, Jo\~{a}o Barros, Muriel M\'edard, \\Ralf Koetter, Ton Kalker, Keesook J.\ Han
\thanks{$^*$ The first three authors contributed equally to this work.}
\thanks{M. Kim, F. Zhao and M. M\'edard (\{minjikim, zhaof, medard\}@mit.edu) are with the Research Laboratory of Electronics at the Massachusetts Institute of Technology, MA USA. L. Lima (luisalima@dcc.fc.up.pt) is with the Instituto de Telecomunica\c{c}\~oes, Department of Computer Science, Faculdade de Ci\^encias, Universidade do Porto, Portugal. J. Barros (jbarros@fe.up.pt) is with the Instituto de Telecomunica\c{c}\~oes, Departamento de Engenharia Electrot\'ecnica e de Computadores, Faculdade de Engenharia da Universidade do Porto, Portugal. R. Koetter (ralf.koetter@tum.de) is with the Institute for Communications Engineering of the Technischen Universitaet Muenchen, Germany. T. Kolker (ton.kalker@hp.com) is with the Hewlett-Packard Laboratories, CA USA. K. Han (keesook.Han@rl.af.mil) is with Air Force Research Laboratory, NY USA.}
\thanks{\vspace*{.7cm}}
}
\maketitle

\begin{abstract}
Random linear network coding can be used in peer-to-peer networks to increase the efficiency of content distribution and distributed storage. However, these systems are particularly susceptible to Byzantine attacks. We quantify the impact of Byzantine attacks on the coded system by evaluating the probability that a receiver node fails to correctly recover a file. We show that even for a small probability of attack, the system fails with overwhelming probability. We then propose a novel signature scheme that allows packet-level Byzantine detection. This scheme allows one-hop containment of the contamination, and saves bandwidth by allowing nodes to detect and drop the contaminated packets. We compare the net cost of our signature scheme with various other Byzantine schemes, and show that when the probability of Byzantine attacks is high, our scheme is the most bandwidth efficient.
\end{abstract} 
\begin{keywords}
Network coding, Byzantine, security, peer to peer, distributed storage, content distribution.
\end{keywords}

\newpage
\section{Introduction}\label{sect:Introduction}
Network coding \cite{ahlswede}, an alternative to the traditional forwarding paradigm, allows algebraic mixing of packets in a network. It maximizes throughput for multicast transmissions \cite{hmmk03, ll04, lmk06}, as well as robustness against failures \cite{algebraic} and erasures \cite{reliable}. Random linear network coding (RLNC), in which nodes independently take random linear combination of the packets, is sufficient for multicast networks \cite{rlc}, and is suitable for dynamic and unstable networks, such as peer-to-peer (P2P) networks \cite{admk05, gr05}.

A P2P network is a cooperative network in which storage and bandwidth resources are shared in a distributed architecture. This is a cost-effective and scalable way to distribute content to a large number of receivers. One such architecture is the BitTorrent system \cite{bt}, which splits large files into small blocks. After a node downloads a block, it acts as a source for that particular block. The main challenges in these systems are the scheduling and management of rare blocks.

As an alternative to current strategies for these challenges, \cite{admk05, gr05} propose the use of RLNC to increase the efficiency of content distribution in a P2P solution. These schemes are completely distributed and eliminate the need of a scheduler, since each node independently forwards a random linear combination. In addition, there is a high probability that each packet a node receives is linearly independent of the previous ones, and thus, the problem of redundancy caused by the flooding approaches in traditional P2P networks is reduced. RLNC based schemes significantly reduce the downloading time and improve the robustness of the system \cite{admk05, gmr06}.

Despite their desirable properties, network coded P2P systems are particularly susceptible to {\em Byzantine attacks} \cite{perlman, liskov,Lamport} -- the injection of corrupted packets into the information flow. Since network coding relies on mixing of packets, a single corrupted packet may easily corrupt the entire information flow \cite{resilient, homomorphic1}. Furthermore, in P2P networks, there is typically no security control over the nodes that join the network and the packets that they redistribute. The topologies of the overlay graphs that arise from traditional P2P networks are often modeled as scale-free and small-world networks \cite{barabasi1999esr, PhysRevE.64.046135}, which are prone to the dissemination of epidemics, such as worms and viruses \cite{PhysRevLett.86.3200,PhysRevE.64.066112}. Several authors address these problems in coded P2P networks. We shall discuss these countermeasures in Section \ref{sect:background}. Most of these can be divided into two main categories: (i) end-to-end error correction and (ii) misbehavior detection.



Motivated by these observations, we address the issues of Byzantine adversaries in coded P2P networks. The main contributions of this paper are as follows:
\begin{itemize}
\item We propose a model for the evaluation of the impact of Byzantine attacks in coded P2P networks, and provide analytical results which show that, even for a small probability of attack, the information can become contaminated with overwhelming probability.
\item We propose a new efficient, packet-based signature scheme, designed specifically for RLNC systems, to detect Byzantine attacks by checking the membership of a received packet in the valid vector space. This scheme allows an one-hop containment of the contamination.
\item We analyze the overhead in terms of bandwidth associated with our signature scheme, and compare it to that of various Byzantine detection schemes. We also show that our scheme is the most bandwidth efficient if the probability of attack is high.
\end{itemize}

This paper is organized as follows. Section \ref{sect:background} gives an overview of network coding in P2P networks and existing Byzantine detection schemes. 
In Section \ref{sect:impact}, we analyze the impact of Byzantine attacks on the system. We propose our signature scheme in Section \ref{sect:scheme}, and compare its overhead with other schemes in Section \ref{sect:overhead}. Finally, we conclude in Section \ref{sect:conclusions}.

\section{Background}\label{sect:background}
\subsection{Network coding in P2P networks}\label{sect:netcodP2P}
References \cite{reliable, rlc} propose a \emph{random block linear network coding system} -- a simple, practical capacity-achieving code, in which every node independently constructs its linear code randomly. In such a system, a source generates information in batches of $G$ packets (called a \emph{generation}). The source then multicasts them to its destination nodes using RLNC, where only the packets from the same generation are mixed. Note that RLNC is a distributed protocol, which requires no state information; thus, making it suitable for dynamic and unstable networks where state information may change rapidly or may be hard to obtain.

Several authors have evaluated the performance of network coding in P2P networks. Gkantsidis \etal \cite{gr05} propose a scheme for content distribution of large files in which nodes make forwarding decisions solely based on local information. This scheme improves the expected file download time and the robustness of the system. Reference \cite{admk05} compares the performance of network coding with traditional coding measures in a distributed storage setting with very limited storage space with the goal of minimizing the number of storage locations a file-downloader connects to. They show that RLNC performs well without the need for a large amount of additional storage space. Dimakis {\em et al} \cite{dimakis-2007} introduce a graph-theoretic framework for P2P distributed system, and show that RLNC minimizes the required bandwidth to maintain the distributed storage architectures.

\subsection{Byzantine detection scheme for network coded systems}\label{sect:byzantimeSchemes}

\subsubsection{End-to-end error correction scheme}\label{sect:end-to-end}
Reference \cite{errorcorrection} introduces \emph{network error correction} for coded systems. They bound the maximum achievable rate in an adversarial setting, and generalize the Hamming, Gilbert-Varshamov, and Singleton bounds. Jaggi \etal \cite{resilient} introduce the first distributed polynomial-time rate-optimal network codes that work in the presence of Byzantine nodes and are information-theoretically secure. The adversarial nodes are viewed as a secondary source. The source adds redundancy to help the receivers distill out the source information from the received mixtures. This work is generalized in \cite{kschischang1, kshischang2}.


\subsubsection{Generation-based Byzantine detection scheme}\label{sect:generation}

Ho \etal \cite{detection} introduce an information-theoretic approach for detecting Byzantine adversaries, which only assumes that the adversary did not see all linear combinations received by the receivers. Their detection probability varies with the length of the hash, field size, and the amount of information unknown to the adversary. A polynomial hash is added to each packet in the generation. Once the destination node receives enough packets to decode a generation, it can probabilistically detect errors. The intuition behind this scheme is that if a packet is valid, then its data and hash are consistent with its coding vector; and a linear combination of valid packets is also valid.


\subsubsection{Packet-based Byzantine detection scheme}\label{sect:packet}
There are several signature schemes that have been presented in the literature. For instance, \cite{admk05, homomorphic2, iowa} use homomorphic hash functions to detect contaminated packets. Reference \cite{homomorphic1} suggests the use of a Secure Random Checksum (SRC) which requires less computation than the homomorphic hash function, but requires a secure channel to transmit the SRCs. In addition, \cite{elliptic} proposes a signature scheme for network coding based on Weil pairing on elliptic curves.




\section{Impact of Byzantine attacks on P2P networks}\label{sect:impact}

In this section, we first introduce our model for evaluating the probability of a distributed denial of service attack (DDoS) caused by Byzantine nodes in a P2P network. We then present results for two distinct scenarios.

\subsection{Model}\label{sect:model}

We consider a directed graph with a set of nodes $\cal N$. A {\em source} node has a large file to be sent to \emph{receiver} nodes. The file is divided into $m$ packets. To do so, the source connects to a subset of nodes, $\setSrcKeepers\subseteq \cal N$, chosen uniformly at random, and sends each of them a different random linear combination of the original file packets. To ensure that enough degrees of freedom exist in the network, $|\setSrcKeepers|\geq m$. We refer to the nodes in $\setSrcKeepers$ as {\em level-s} nodes. A {\em tracker} node keeps track of the list of \emph{informed nodes}, $N(t)$, \emph{i.e.}, nodes that keep an information packet.

For a receiver to retrieve the file, it connects to a subset of nodes $\setTargetKeepers\subseteq \cal N$, chosen uniformly at random, with $|\setTargetKeepers|\geq|\setSrcKeepers|$. We refer to the nodes in $\setTargetKeepers$ as \emph{level-r} nodes. Note that there may be an overlap between level-s and level-r. In each time slot, one of the uninformed level-r nodes, $\node \in \setTargetKeepers\setminus \setSrcKeepers$, contacts the tracker to retrieve a random list of $d$ informed nodes, where $d<|\setSrcKeepers|$. The node  $\node$ then connects to these informed nodes through a secure overlay connection, retrieves their  packets, and stores a single random linear combination of these packets. During the same time slot, the tracker updates its list of informed nodes to $N(t)\cup \{\node\}$. This process is repeated for all nodes in $\setTargetKeepers\setminus \setSrcKeepers$, and then all level-r nodes forward their stored packets to the receiver. In order to maximize the probability of storing linearly independent combinations in level-r nodes and ensure decodability at the receiver, we set $d\ge 2$. Although we assume that each node in level-s and level-r stores only one packet, the model can be easily generalized to account for higher numbers. An example of this network model is shown in Figure \ref{network_setup}.  Note that the tracker is considered to be a trusted party in our model -- in fact, as in the case of most P2P protocols, a dishonest tracker would yield a protocol failure with overwhelming probability.

\begin{figure}[tbp]
\begin{center}
\includegraphics[scale=0.5]{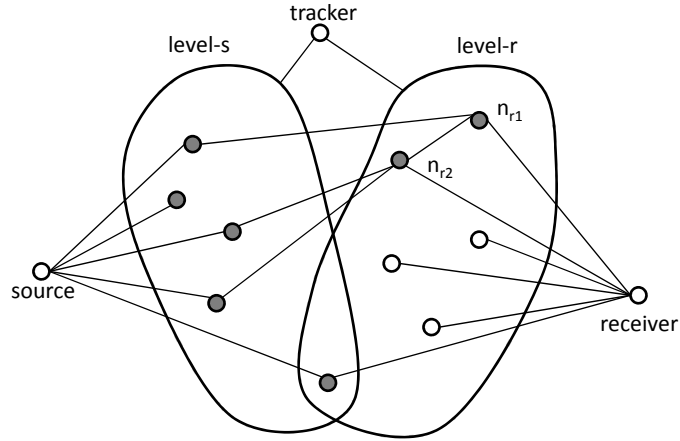}
\caption{Network model. The source is connected to the level-s nodes, and the receiver is connected to the level-r nodes. The dark nodes are the informed nodes. The level-r nodes take turns to contact the tracker, and connects to $|\setKeepersList|=2$ level-s nodes based on the list returned by the tracker. Here, nodes $n_{r1}$ and $n_{r2}$ has completed this process, and the other level-r nodes have not.}
\label{network_setup}
\end{center}
\vspace*{-0.5cm}
\end{figure}

We define an \emph{Information Contact Graph} $G(t)=\{N(t), A(t)\}$ to denote the evolving graph formed in the above process, where $N(t)$ is the list of informed nodes and $A(t)$ is the set of overlay links that connect the level-s and level-r nodes.
The probability that a node becomes a Byzantine attacker is $\probB$. An attacker corrupts the packet it stores by generating arbitrary content while complying to the standard packet format. A node independently decides whether it becomes Byzantine at the start of the file dissemination process according to $\probB$ and stays that way throughout the process. We define an indicator variable $I_b(n)$ which is 1 if node $n$ is Byzantine and 0 otherwise. The tracker has no information about which nodes are Byzantine. A \emph{contaminated packet} is a packet that is either directly corrupted by an attacker, or is a linear combination that involves at least one contaminated packet. A \emph{contaminated node} is a node that stores a contaminated packet. The \emph{blocking probability} $\Psi$ is the probability that the receiver collects at least one contaminated packet, and thus, is unable to decode the file. This is equivalent to the probability that the attacker successfully carries out a DDoS attack.

\subsection{Analysis of Impact of Byzantine Attacks}

 We now evaluate the {\em blocking probability} at the receiver. We then consider the {\em expected number of contaminated nodes} at any given time. First, we introduce necessary definitions, as follows.
 We define an indicator variable $I_c(t,n)$ which is equal to $1$ if node $\node$ is contaminated at time $t$ and $0$ otherwise.
 $\nrCont{t}$ is a random variable for the number of contaminated nodes in $\setInformedNodes{t}$, and $\nrUnCont{t} = |\setInformedNodes{t}| - \nrCont{t}$ is the number of uncontaminated nodes. The function $h(k; N, m, n)$ denotes the hypergeometric distribution, in which
 $$h(k; N, m, n) = 
\dbinom{m}{k}\dbinom{N-m}{n-k} / \dbinom{N}{m}.$$
 Let $\nrInfoByzantine$ denote the number of informed Byzantine nodes at time $t=0$, that is, the number of Byzantine nodes in $\setSrcKeepers$. $\nrInfoByzantine$ has a binomial distribution with parameters $(\nrSrcKeepers, \probB)$.

We consider two scenarios. In \theorref{th:1}, for simplicity, we consider a static informed nodes list, in which the list kept by the tracker is fixed to $\setSrcKeepers$. In this case, level-r nodes only connect to level-s nodes. Second, in \theorref{th:2}, we generalize to the case in which the tracker updates its list of informed nodes to $\setInformedNodes{t}$, as stated in Section \ref{sect:model}.

\begin{theorem}[Static Informed Nodes List]\label{th:1}
Let $\graph{t}$ be an information contact graph in which nodes in $\setTargetKeepers$ only connect to nodes in $\setSrcKeepers$. Then its blocking probability $\blockProb$ is given by:

\vspace{-0.3cm}
\begin{equation*}
\blockProb =
1 - \sum_{y=0}^{\nrSrcKeepers} h(y; \nrDataKeepers, \nrSrcKeepers, \nrTargetKeepers) \left(\sum_{i=0}^{\nrSrcKeepers} \dbinom{\nrSrcKeepers}{i} \probB^i (1-\probB)^{\nrSrcKeepers-i}  f(i, y) \right),
\end{equation*}

where
\vspace{-0.3cm}
\begin{equation*}
f(i,y) = \left( 1- \frac{i}{\nrSrcKeepers} \right)^y \bigg{[}  (1- \probB) h(0, \nrSrcKeepers, i, \nrKeepersList) \bigg{]}^{\nrTargetKeepers - y}.
\end{equation*}

\end{theorem}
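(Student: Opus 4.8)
The plan is to compute the blocking probability by conditioning on the relevant sources of randomness and exploiting independence wherever the model allows it. First I would set up the conditioning hierarchy: (1) the overlap $y = |\setSrcKeepers \cap \setTargetKeepers|$ between level-s and level-r nodes, which by the uniform-at-random choice of the two subsets from $\cal N$ is hypergeometrically distributed, giving the outer factor $h(y; \nrDataKeepers, \nrSrcKeepers, \nrTargetKeepers)$; and (2) the number $i$ of Byzantine nodes among the $\nrSrcKeepers$ level-s nodes, which is binomial with parameters $(\nrSrcKeepers, \probB)$, giving the factor $\dbinom{\nrSrcKeepers}{i}\probB^i(1-\probB)^{\nrSrcKeepers-i}$. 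Conditioned on these two quantities, I claim the probability that the receiver collects \emph{no} contaminated packet factors as $f(i,y)$, and then $\blockProb = 1 - \mathbb{E}[\,\text{(no contamination)}\,]$ gives the stated formula.

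The core of the argument is therefore to justify the form of $f(i,y)$. Here I would argue that since the tracker's list is frozen at $\setSrcKeepers$ in this scenario, every level-r node connects \emph{only} to level-s nodes, so there is no propagation of contamination beyond one hop and the fates of the level-r nodes become conditionally independent given $i$. The receiver fails to decode iff at least one of its $\nrTargetKeepers$ connected level-r nodes is contaminated; equivalently it succeeds iff all are clean. Split the $\nrTargetKeepers$ level-r nodes into the $y$ that also lie in $\setSrcKeepers$ and the $\nrTargetKeepers - y$ that do not. For a level-r node that is also a level-s node, it is clean iff it is itself not one of the $i$ Byzantine level-s nodes (it stores its original source packet, which is uncontaminated precisely when the node is honest); picking such a node uniformly among the $\nrSrcKeepers$ level-s nodes, this happens with probability $1 - i/\nrSrcKeepers$, and across the $y$ such nodes — which by the model are distinct — this contributes $(1 - i/\nrSrcKeepers)^y$. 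For a pure level-r node $n$, it draws $\nrKeepersList$ of the $\nrSrcKeepers$ level-s nodes (uniformly, without replacement), stores a random linear combination of their packets, and is clean iff (a) $n$ itself is not Byzantine — probability $1-\probB$ — and (b) none of the $\nrKeepersList$ drawn neighbors is among the $i$ Byzantine level-s nodes — probability $h(0, \nrSrcKeepers, i, \nrKeepersList)$, the hypergeometric probability of drawing zero ``bad'' items in $\nrKeepersList$ draws from a population of $\nrSrcKeepers$ containing $i$ bad ones. Multiplying and raising to the power $\nrTargetKeepers - y$ (again using conditional independence across the pure level-r nodes) yields the bracketed term, and hence $f(i,y)$.

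A couple of modeling points need to be pinned down carefully, and these are where I expect the real work to be. One is the independence claims: the $\nrKeepersList$-node lists handed out to different pure level-r nodes are independent of one another and of which nodes are Byzantine (the tracker has no Byzantine information and the list is static), so conditioning on $i$ genuinely decouples the events; I would state this explicitly. A second subtlety is that a random linear combination of packets of which at least one is contaminated is itself contaminated \emph{with probability $1$} under the model's definition (any combination involving a contaminated packet is contaminated), so there is no field-size correction term here — contamination is a combinatorial, not a probabilistic, event once a bad neighbor is included; this is exactly why $h(0,\nrSrcKeepers,i,\nrKeepersList)$ appears rather than something depending on $q$. Finally one should note that $d < \nrSrcKeepers$ guarantees the hypergeometric draw is well-defined, and that the cases where fewer than $\nrKeepersList$ neighbors are available do not arise.

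The main obstacle, then, is not any hard calculation but the bookkeeping of \emph{which} level-r nodes are clean and \emph{why} the relevant events are conditionally independent — in particular correctly handling the $y$ overlapping nodes (whose cleanliness depends directly on the Byzantine indicator, not on any draw) separately from the $\nrTargetKeepers - y$ pure level-r nodes (whose cleanliness depends both on their own Byzantine status and on their random neighbor list). Once that decomposition is in place, assembling $\blockProb = 1 - \sum_y h(y;\cdot)\sum_i \mathrm{Binom}(i)\, f(i,y)$ by the law of total probability is immediate.
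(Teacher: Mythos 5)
Your proposal is correct and follows essentially the same route as the paper's proof: condition on the overlap size $Y=y$ (hypergeometric) and the number of Byzantine level-s nodes $\nrInfoByzantine=i$ (binomial), split $\setTargetKeepers$ into $\setTargetKeepers\cap\setSrcKeepers$ and $\setTargetKeepers\setminus\setSrcKeepers$, compute the per-node clean probabilities $1-i/\nrSrcKeepers$ and $(1-\probB)h(0,\nrSrcKeepers,i,\nrKeepersList)$ respectively, multiply by conditional independence, and apply the law of total probability. Your explicit remarks on why the neighbor lists decouple and why contamination is deterministic (no field-size correction) are sound and, if anything, make the argument more careful than the paper's.
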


\vspace{0.2cm}

\begin{proof}
We consider two disjoint subsets of $\setTargetKeepers$:  the set of informed nodes at $t=0$, that is, $\setTargetKeepers \cap \setSrcKeepers$, and the uninformed nodes, that is, $\setTargetKeepers \backslash \setSrcKeepers$.
Let $Y$ be a random variable for the number of nodes in $\setTargetKeepers \cap \setSrcKeepers$. $Y$ has a hypergeometric distribution, $P(Y=y) = h(y; \nrDataKeepers, \nrSrcKeepers, \nrTargetKeepers)$.

We first consider $ \node \in \setTargetKeepers \cap \setSrcKeepers$. Given $\nrInfoByzantine=i$ and $Y=y$, the probability that $\node$ is uncontaminated is equal to the probability that it is not initially Byzantine, which is equal to $1 - i/\nrSrcKeepers$.
Then, the probability that all nodes in $\setTargetKeepers \cap \setSrcKeepers$ are uncontaminated is:

\vspace{-0.6cm}
\begin{equation*}
P(\unContNode{\node}{0}, \forall \node \in \setTargetKeepers \cap \setSrcKeepers | \nrInfoByzantine = 1, Y = y ) = \left( 1 - \frac{i}{\nrSrcKeepers} \right)^y.
\end{equation*}

Now, at each timeslot $t>0$, a node $\node \in \setTargetKeepers \backslash \setSrcKeepers$ becomes informed. For $\node$ to be uncontaminated, it must not be Byzantine and it must connect to $\nrKeepersList$ uncontaminated nodes. Then,

\vspace{-0.6cm}
\begin{equation*}
P(\unContNode{\node}{t}| \nrInfoByzantine = i, Y=y) = (1- \probB) h(0, \nrSrcKeepers, i, \nrKeepersList).
\end{equation*}

It follows that the probability that all nodes in $\setTargetKeepers \backslash \setSrcKeepers$ are uncontaminated at time $t$ is:

\vspace{-0.7cm}
\begin{equation*}
P(\unContNode{\node}{t}, \forall \node \in \setTargetKeepers \backslash \setSrcKeepers | \nrInfoByzantine = i, Y=y) = \bigg( (1- \probB) h(0, \nrSrcKeepers, i, \nrKeepersList) \bigg)^t, \text{ for } 0 \leq t \leq \nrTargetKeepers - y.
\end{equation*}

Note that since $|\setTargetKeepers \backslash \setSrcKeepers|$ nodes are added, the information dissemination process ends at $t=\nrTargetKeepers-y$. Now, the probability that only uncontaminated nodes exist in $\setTargetKeepers$ at time $t = \nrTargetKeepers - y$, conditioned on $Y=y$ and $\nrInfoByzantine = i$, is:

\vspace{-0.5cm}
\begin{equation*}
f(i,y) = \left( 1- \frac{i}{\nrSrcKeepers} \right)^y \bigg{[}  (1- \probB) h(0, \nrSrcKeepers, i, \nrKeepersList) \bigg{]}^{\nrTargetKeepers - y}.
\end{equation*}



$\nrInfoByzantine$ has a binomial distribution, $Y$ has a hypergeometric distribution and they are independent of each other. Taking out these two conditions, the probability that all nodes in $\setTargetKeepers$ are uncontaminated is:

\vspace{-0.5cm}
\begin{equation*}\label{equation:2}
\gamma =
\sum_{y=0}^{\nrSrcKeepers} h(y; \nrDataKeepers, \nrSrcKeepers, \nrTargetKeepers) \left(\sum_{i=0}^{\nrSrcKeepers} \dbinom{\nrSrcKeepers}{i} \probB^i (1-\probB)^{\nrSrcKeepers-i}  f(i, y) \right).
\end{equation*}

It follows that the blocking probability is $\blockProb = 1 - \gamma$.
\end{proof}

We now consider that the list of informed nodes at the tracker is $\setInformedNodes{t}$, that is, it is updated with each new informed level-r node.

\begin{theorem}[Evolving informed nodes list]\label{th:2}

Let $\graph{t}$ be an information contact graph in which $|\setTargetKeepers \backslash \setSrcKeepers|$ are to be added to the graph by connecting to nodes in $\setInformedNodes{t}$. Then its blocking probability $\blockProb$ is:

\vspace{-0.5cm}
\begin{equation*}
\blockProb = 1 - \sum_{y=0}^{\nrSrcKeepers} h(y; \nrDataKeepers, \nrSrcKeepers,\nrTargetKeepers) \left( \sum_{i=0}^{\nrSrcKeepers} \dbinom{\nrSrcKeepers}{i} \probB^i (1-\probB)^{\nrSrcKeepers-i} f(i, y) \right),
\end{equation*}

where

\vspace{-1.1cm}
\begin{equation*}\label{equation:5}
f(i, y) = \left( 1 - \frac{i}{\nrSrcKeepers} \right)^y \Bigg{[}\prod_{t=1}^{\nrTargetKeepers-y} (1- \probB) h (0; \nrSrcKeepers +t-1, i, d)\Bigg{]}.
\end{equation*}

\vspace{0.05cm}

\end{theorem}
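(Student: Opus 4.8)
The plan is to mirror the proof of \theorref{th:1}, keeping the outer structure intact and changing only how the uninformed level-$r$ nodes are analysed so as to track the \emph{growing} informed list. As before, I would split $\setTargetKeepers$ into the overlap $\setTargetKeepers\cap\setSrcKeepers$, whose size $Y$ satisfies $P(Y=y)=h(y;\nrDataKeepers,\nrSrcKeepers,\nrTargetKeepers)$, and the $\nrTargetKeepers-Y$ uninformed level-$r$ nodes, which now enter $\setInformedNodes{t}$ one per time slot $t=1,\dots,\nrTargetKeepers-y$. I then condition on $Y=y$ and on $\nrInfoByzantine=i$, the latter being binomial with parameters $(\nrSrcKeepers,\probB)$, using that the two are independent. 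The $y$ overlap nodes are already informed at $t=0$, so their contamination depends only on their own Byzantine status, and exactly as in \theorref{th:1} the conditional probability that all of them are uncontaminated is $\left(1-\tfrac{i}{\nrSrcKeepers}\right)^{y}$; this is the first factor of $f(i,y)$.

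For the uninformed nodes I would reveal them in their arrival order and apply the chain rule, writing the conditional probability that all $\nrTargetKeepers-y$ of them are uncontaminated as $\prod_{t=1}^{\nrTargetKeepers-y} q_t$, where $q_t$ is the probability that the node entering in slot $t$ is uncontaminated given $\nrInfoByzantine=i$, $Y=y$, and that every node entering in slots $1,\dots,t-1$ is uncontaminated. The only substantive difference from \theorref{th:1} lies in identifying $q_t$. At the start of slot $t$ the list $\setInformedNodes{t-1}$ has exactly $\nrSrcKeepers+t-1$ members --- the $\nrSrcKeepers$ level-$s$ nodes together with the $t-1$ level-$r$ nodes added so far --- and under the conditioning each of those $t-1$ added nodes is non-Byzantine and stores a combination of uncontaminated packets, so the contaminated members of $\setInformedNodes{t-1}$ are still exactly the $i$ initially-Byzantine level-$s$ nodes. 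Since, given that this count equals $i$, the contaminated members are exchangeable inside $\setInformedNodes{t-1}$, the $\nrKeepersList$ neighbours drawn uniformly at random by the new node avoid all of them with probability $h(0;\nrSrcKeepers+t-1,i,\nrKeepersList)$; combined with the independent event that the new node is itself not Byzantine, this gives $q_t=(1-\probB)\,h(0;\nrSrcKeepers+t-1,i,\nrKeepersList)$. Multiplying over $t$ and including the overlap factor yields precisely the stated $f(i,y)$.

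It then remains to remove the conditioning: since $\nrInfoByzantine$ is binomial, $Y$ is hypergeometric, and they are independent, the probability that every node in $\setTargetKeepers$ is uncontaminated is $\gamma=\sum_{y}h(y;\nrDataKeepers,\nrSrcKeepers,\nrTargetKeepers)\sum_{i}\binom{\nrSrcKeepers}{i}\probB^{i}(1-\probB)^{\nrSrcKeepers-i}f(i,y)$, and because the receiver fails to decode exactly when at least one level-$r$ packet it collects is contaminated, $\blockProb=1-\gamma$. The step I expect to require the most care is the identification of $q_t$: one must argue cleanly that conditioning on the uncontaminated-ness of the earlier arrivals both freezes the contaminated count of $\setInformedNodes{t-1}$ at $i$ and keeps those contaminated nodes exchangeable within the enlarged population of size $\nrSrcKeepers+t-1$, so that the per-slot hypergeometric factors are exact and genuinely multiply across slots. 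The rest is a routine adaptation of \theorref{th:1}.
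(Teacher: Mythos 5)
Your proposal is correct and follows essentially the same route as the paper: the same split into $\setTargetKeepers\cap\setSrcKeepers$ and $\setTargetKeepers\backslash\setSrcKeepers$, the same conditioning on $\nrInfoByzantine$ and $Y$, and the same per-slot factor $(1-\probB)\,h(0;\nrSrcKeepers+t-1,i,\nrKeepersList)$. The only cosmetic difference is that the paper packages the sequential conditioning as a Markov chain on states $(\nrCont{t},\nrUnCont{t})$ and computes the probability of the unique path along which no contaminated node is ever added, which is exactly your chain-rule product $\prod_t q_t$ with the contaminated count frozen at $i$.
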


\begin{proof}
Recall from \theorref{th:1} that we consider two disjoint subsets of $\setTargetKeepers$, that is, $\setTargetKeepers \cap \setSrcKeepers$ and $\setTargetKeepers \backslash \setSrcKeepers$. As before, $Y$ is the number of nodes in $\setTargetKeepers \cap \setSrcKeepers$.
Again, at time $t=0$, the probability that all nodes in $\setTargetKeepers \cap \setSrcKeepers$ are uncontaminated given $\nrInfoByzantine=i$ and $Y=y$ is $(1 - i/ \nrSrcKeepers)^y$.


We now consider the nodes in $\setTargetKeepers \backslash \setSrcKeepers$ and assume $\nrInfoByzantine=i, Y=y$.
At each time step, there are $\nrCont{t}$ contaminated nodes and $\nrUnCont{t} = \nrSrcKeepers+t-\nrCont{t}$ uncontaminated nodes in $\setInformedNodes{t}$.
The probability of obtaining a contaminated node at time $t+1$ is only dependent on $\nrCont{t}$ and $\nrUnCont{t}$, and thus, we can model these probabilities by Markov chains  $\markovC|{\nrInfoByzantine, Y}=\{S,\mathbf{P}\}$, in which $S$ represents the set of states and $\mathbf{P}$ represents the matrix of transition probabilities. A state in $S$ is represented by $s = (\nrCont{t},\nrUnCont{t})$. Transitions from $s$ are only possible to $s' = (\nrCont{t}+1,\nrUnCont{t})$ and to $s'' = (\nrCont{t},\nrUnCont{t}+1)$.
It is also important to note that the depth of the Markov chain is equal to $|\setTargetKeepers \backslash \setSrcKeepers | = \nrTargetKeepers - y$. The transition probabilities from $s$ when adding a node $\node$ are $P(s\rightarrow s')=P(\contNode{t+1}{\node}|\nrCont{t},\nrUnCont{t},\nrInfoByzantine,Y)$ and $P(s\rightarrow s'')=P(\unContNode{t+1}{\node}|\nrCont{t},\nrUnCont{t}, \nrInfoByzantine, Y)$. $\markovC|\nrInfoByzantine, Y$ is illustrated in ~\fig{fig:Markov} for $|\setTargetKeepers \backslash \setSrcKeepers| =2$.

\begin{figure}[tbp]
\centering
\includegraphics[width=12cm]{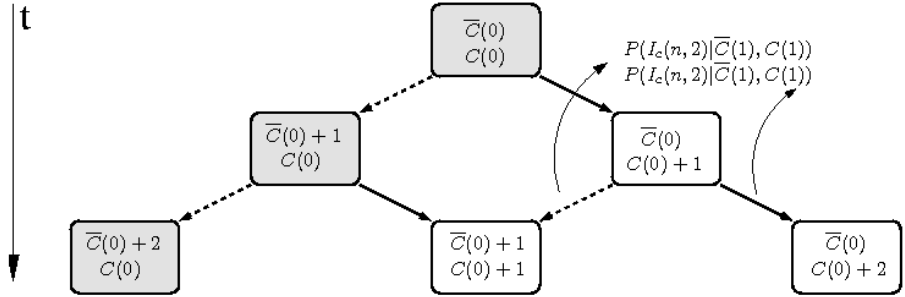}
\caption{ Markov diagram for the dissemination process, $\nrTargetKeepers-Y=2$.  The transitions to the left (dotted arrows) represent the addition of an uncontaminated node, and the transitions to the right (filled arrows) represent the addition of a contaminated node. The grey states are considered in computing $\blockProb$, that is, the states in which no contaminated nodes are added.}
\label{fig:Markov}
\vspace{-0.5cm}
\end{figure}

Let us denote $\nrCont{t}$ as $x$ and $t'=\nrSrcKeepers + t$, it follows that $\nrUnCont{t} = t'-x$. Now let  $p_{\{s\}}^{t}$ denote the probability of being in state $s$ at time $t$.
%
 $p_{\{s\}}^{t} =  p_{\{x,t-x\}}^{t}$ can be defined recursively as:
 \vspace{-0.1cm}
 \begin{equation*}
\begin{cases}
 p_{\{x,t'-x\}}^{t} = p_{\{x-1,t'-x\}}^{t-1}p_{(\{x-1,t'-x\} \rightarrow \{x,t'-x\} )}
+ p_{\{x,t'-x-1\}}^{t-1}p_{(\{x,t'-x-1\} \rightarrow \{x,t'-x\} )}, \\
 p_{(\{x,t'-x\} \rightarrow \{x+1,t'-x\} )} = 1- P(\unContNode{t}{\node}|x,t'-x, \nrInfoByzantine=i, Y=y), \\
 p_{(\{x,t'-x\} \rightarrow \{x,t'-x+1\} )} =  P(\unContNode{t}{\node}|x,t'-x,\nrInfoByzantine=i, Y=y), \\ 
 p_{\{i, \nrSrcKeepers-i\}}^{0}= 1.
\end{cases}
 \end{equation*}

Now, consider that node $\node$ is active at time $t$. The probability of $\node$ being uncontaminated is the probability that it is not Byzantine and does not connect to contaminated nodes. Thus,

\vspace{-0.75cm}

\VONEJSAC{
\begin{eqnarray*}\label{equation:3}
P(\unContNode{t}{n} | \nrUnCont{t-1}, \nrCont{t-1}, \nrInfoByzantine = i, Y = y) 
&=& (1-\probB) h(0; \nrSrcKeepers + t -1, \nrCont{t-1}, d).
%
\end{eqnarray*}
}

Now, notice that the probability of only having uncontaminated nodes at time $t= \nrTargetKeepers - y $ is the probability of, starting in state $(\nrCont{0}, \nrUnCont{0}) = (i, \nrSrcKeepers -i)$, ending in state $(i, \nrSrcKeepers-i+\nrTargetKeepers-y)$ after $\nrTargetKeepers- y$ steps: in that case, no contaminated node is added to the network. The probability of this event, conditioned on $\nrInfoByzantine=i$ and $Y=y$, is

\vspace{-0.7cm}
\begin{equation*}
\prod_{t=1}^{\nrTargetKeepers - y} P(\unContNode{t}{n} | \nrUnCont{t-1}, \nrCont{t-1}, \nrInfoByzantine = i, Y = y) = \prod_{t=1}^{\nrTargetKeepers-y} (1- \probB) h (0; \nrSrcKeepers +t-1, i, d).
\end{equation*}

Combining the results for sets $\setTargetKeepers \cap \setSrcKeepers$ and $\setTargetKeepers \backslash \setSrcKeepers$, we have that the probability that no contaminated nodes exist in $\setTargetKeepers$ given that $\nrInfoByzantine=i$ and $Y=y$ is given by

\vspace{-0.5cm}
\begin{equation*}\label{equation:5}
f(i, y) = \left( 1 - \frac{i}{\nrSrcKeepers} \right)^y \Bigg{[}\prod_{t=1}^{\nrTargetKeepers-y} (1- \probB) h (0; \nrSrcKeepers +t-1, i, d)\Bigg{]}.
\end{equation*}

Finally, it follows that the blocking probability at time $|\setTargetKeepers \backslash \setSrcKeepers|$ is

\vspace{-0.5cm}
\begin{equation*}
\blockProb = 1 - \sum_{y=0}^{\nrSrcKeepers} h(y; \nrDataKeepers, \nrSrcKeepers,\nrTargetKeepers) \left( \sum_{i=0}^{\nrSrcKeepers} \dbinom{\nrSrcKeepers}{i} \probB^i (1-\probB)^{\nrSrcKeepers-i} f(i, y) \right).
\end{equation*}
\end{proof}

\vspace{-0.2cm}

The results from \emph{Theorems 1} and \emph{2} are illustrated in \fig{fig:theorems}. Note that even for a small $p_b$, the blocking probability $\blockProb$ is very high. Even for the case in \theorref{th:1}, $\blockProb$ grows exponentially. This is because it is sufficient for a single level-r node to connect to a Byzantine node in level-s to contaminate the receiver. Figure \ref{fig:theorems} indicates that $\blockProb$ grows faster for the evolving informed node list than for the static informed node list). This is due to the fact that as more nodes are added to the network, the presence of contaminated nodes becomes more likely, and thus, the probability that a level-r node connects to at least one contaminated node increases. The probability $\blockProb$ also increases with other parameters such as $\nrKeepersList$, $\nrSrcKeepers$, and $\nrTargetKeepers$ since they increase the probability of level-r nodes connecting to contaminated nodes.

From the above proofs, it follows that the number of contaminated nodes in $\setInformedNodes{t}, t>0$, is dependent on the random variable $Y = |\setTargetKeepers \cap \setSrcKeepers|$. We now perform an analysis of the expected number of contaminated nodes in the network $E[\nrCont{t}]$ conditioned on $Y=y$.

First, we consider the case of the static informed nodes list, conditioned on $\nrInfoByzantine=i, Y=y$. It is clear that $E[\nrCont{0}|\nrInfoByzantine=i] = i$. Now, at each time step $t$, one contaminated node is added to $\setInformedNodes{t}$ with probability $1-P(\unContNode{\node}{t}| \nrInfoByzantine = i, Y=y)$ and thus $E[\nrCont{t}|\nrInfoByzantine=i, Y=y] = i + t(1-(1-\probB) h(0; \nrSrcKeepers, i, d))$. It follows that
\begin{equation*}
E[\nrCont{t}| Y=y] = \sum_{i=0}^{\nrSrcKeepers} \dbinom{\nrSrcKeepers}{i} \probB^i (1-\probB)^{\nrSrcKeepers-i} \bigg( i + t(1-(1-\probB) h(0; \nrSrcKeepers, i, d)) \bigg).
\end{equation*}

In the case of the evolving informed nodes list, since the states of $\markovC|\nrInfoByzantine, Y$ are representative of the number of contaminated nodes in the network, $E[\nrCont{t}|\nrInfoByzantine, Y]$ has a direct correspondence to the expected state the Markov Chain is in after $t$ time steps; therefore:

\vspace{-0.5cm}
\begin{equation*}
E[\nrCont{t}| Y=y] = \sum_{i=0}^{\nrSrcKeepers} \dbinom{\nrSrcKeepers}{i} \probB^i (1-\probB)^{\nrSrcKeepers-i} \bigg(  \sum_{x=i}^{i+t}  x  p^t_{\{x, t'-x\}}\bigg).
\end{equation*}

In order to visualize these results, we take the expected value of $Y$ for the set of parameters chosen in \fig{fig:theorems}, which is equal to 1. Then, we plot $E[\nrCont{t}| Y=1]$ for the static and evolving informed node lists. It is shown in \fig{fig:exp} that the expected number of contaminated nodes in the static case is linear with time. For small probabilities $\probB$, the $E[\nrCont{t}| Y=1]$ is higher for the evolving case; as $\probB$ increases, the values for both cases become similar.


\begin{figure}[tbp]
\centering
\includegraphics[width=11cm]{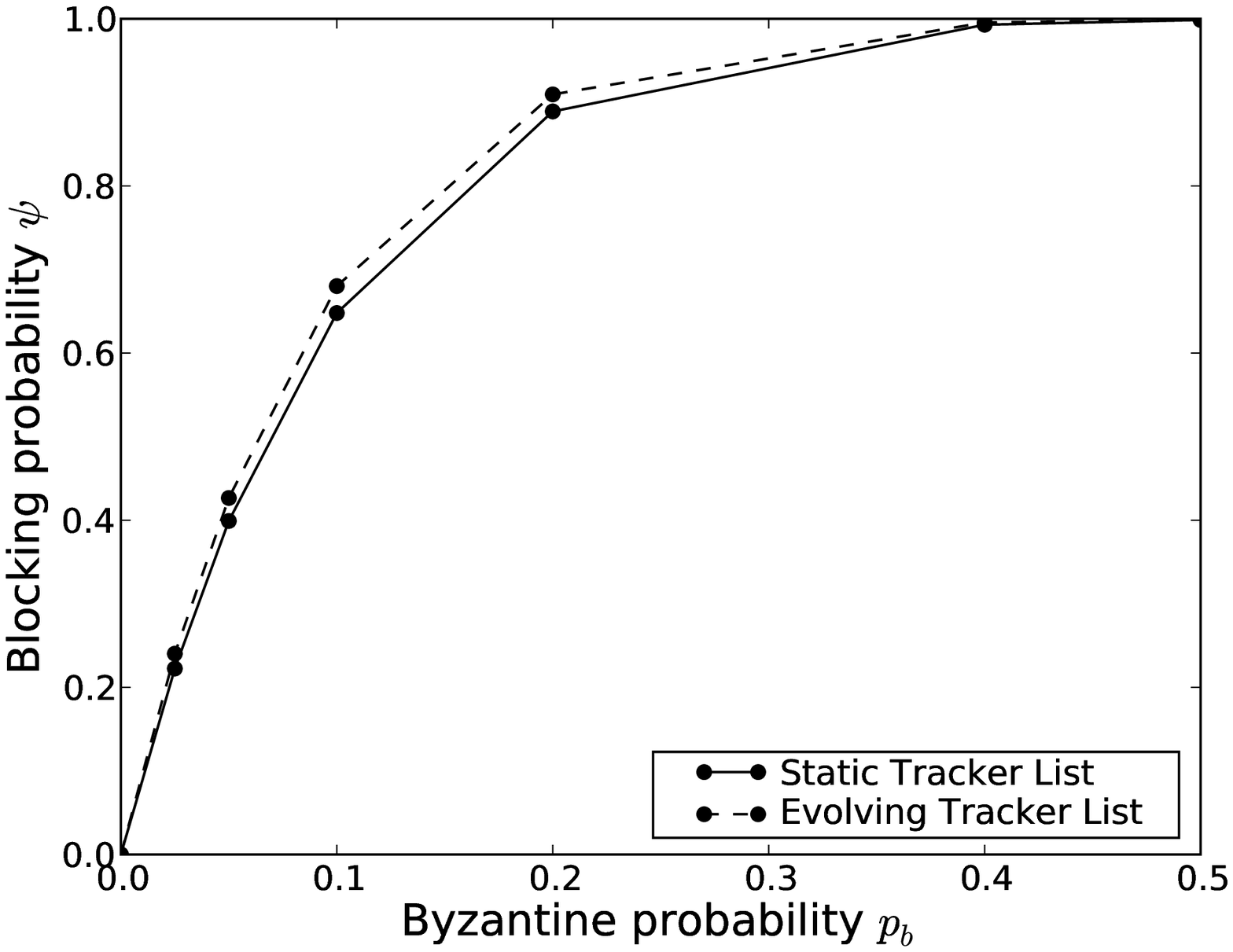}
\caption{Blocking probability in function of $\probB$ for $\nrDataKeepers=30$, $\nrSrcKeepers=5$, $\nrTargetKeepers=6$ and $\nrKeepersList=3$. The results for the static and evolving informed nodes list are shown in full and dashed, respectively.}
\label{fig:theorems}
\vspace{-0.4cm}
\end{figure}

\begin{figure}[tbp]
\centering
\includegraphics[width=11cm]{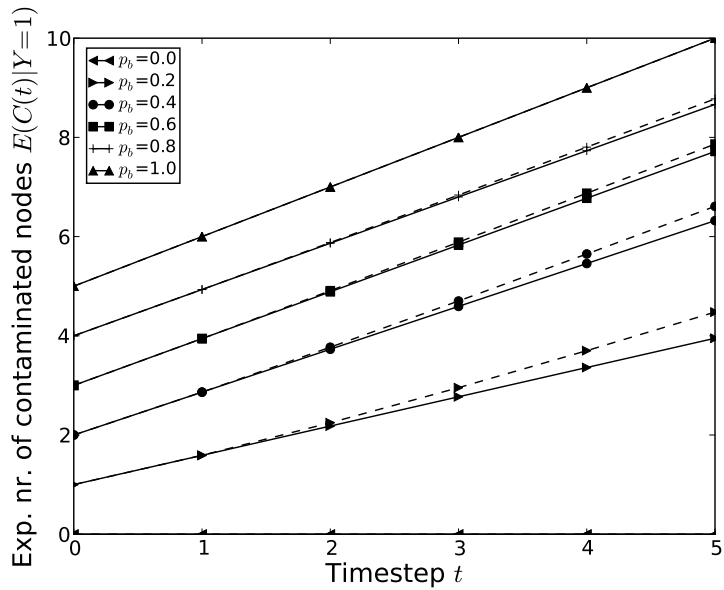}
\caption{Expected number of contaminated nodes in function of time, for $\nrDataKeepers=30$, $\nrSrcKeepers=5$, $\nrTargetKeepers=6$, $\nrKeepersList=3$ and $Y=1$. The results for the static and evolving informed nodes list are shown in full and dashed, respectively.}
\label{fig:exp}
\vspace{-0.4cm}
\end{figure}


\section{Signature scheme for Byzantine detection}\label{sect:scheme}

From the previous Section, we can see that coded P2P networks are highly vulnerable to Byzantine attacks, and the contamination can quickly spread throughout the network. Although we only consider a particular network model in Section \ref{sect:impact} for the purpose of analysis, such problems exist in all network coded systems. Therefore, it is desirable to have a signature scheme that checks the validity of each received packet without decoding the whole file. Then the contamination can be contained in one-hop, and we can avoid the decoding delay. In uncoded systems, the source knows all the packets being transmitted in the network, and therefore, can sign each one of them. However, in a coded system, each node produces ``new" packets, and standard digital signature schemes do not apply. Previous work that attempts to solve this problem is based on homomorphic hash functions \cite{admk05, homomorphic2, iowa}, Secure Random Checkup \cite{homomorphic1}, or Weil pairing on elliptic curves \cite{elliptic}. In this section, we introduce a novel signature scheme for the coded system based on the Discrete Logarithm problem.

We consider a directed graph with a set of nodes $\cal N$. A {\em source} node has a large file to be sent to \emph{receiver} nodes. The file is divided into $m$ packets. A node in the network receives linear combinations of the packets from the source or from other nodes. In this framework, a node is also a server to packets it has downloaded, and always sends out random linear combinations of all the packets it has obtained so far to other nodes. When a receiver has received $m$ linearly independent packets, it can re-construct the whole file.
We denote the $m$ original packets as $\mathbf{\bar v}_1, ...,\mathbf{\bar v}_m$, and view them as elements in $l$-dimensional vector space $\mathbb{F}_p^l$, where $p$ is a prime. The source node adds coding vectors to create  ${\mathbf v}_1, ..., {\mathbf v}_m$, ${\mathbf v}_i = (0, ..., 1, ..., 0, {\bar v}_{i1}, ...,{\bar v}_{il})$,
where the first $m$ elements are zero except the $i$th element which is 1, and $\bar v_{ij}\in \mathbb{F}_p$ is the $j$th element in $\mathbf{\bar v}_i$. A packet ${\mathbf w}$ received by a node is a linear combination of these vectors,
\[
{\mathbf w} = \sum_{i=1}^{m}\beta_i{\mathbf v}_i,
\]
where $(\beta_1, ..., \beta_m)$ is the global coding vector.

The key observation for our signature scheme is that the vectors ${\mathbf v}_1, ..., {\mathbf v}_m$ span a subspace $V$ of ${\mathbb F}_p^{m+l}$, and a received vector ${\mathbf w}$ is a valid linear combination of vectors ${\mathbf v}_1, ..., {\mathbf v}_m$ if and only if it belongs to $V$. Our scheme is based on standard modulo arithmetic (in particular the hardness of the Discrete Logarithm problem) and on an invariant signature for the linear span $V$. Each node verifies the integrity of a received vector ${\mathbf w}$ by checking the membership of ${\mathbf w}$ in $V$ based on the signature.

Our signature scheme is defined by the following ingredients:
\begin{itemize}
\item $q$: a large prime number such that $p$ is a divisor of $q-1$. Note that standard techniques, such as that used in Digital Signature Algorithm (DSA) \cite{dsa}, apply to find such $q$.
\item $g$: a generator of the group $G$ of order $p$ in ${\mathbb F}_q$. Since the order of the multiplicative group ${\mathbb F}_q^*$ is $q-1$ (a multiple of $p$), we can always find a subgroup, $G$, with order $p$ in ${\mathbb F}_q^*$.
\item Private key: ${\mathbf K}_s = \{\alpha_i\}_{i=1,...,m+l}$, a random set of elements in ${\mathbb F}^*_p$, only known to the source.
\item Public key: ${\mathbf K}_p = \{h_i=g^{\alpha_i}\}_{i=1, ..., m+l}$, signed by some standard signature scheme, e.g., DSA, and published by the source.
\end{itemize}

\noindent To distribute a file in a secure manner, the signature scheme works as follows.
\begin{enumerate}
\item Using the vectors ${\bf v}_1, ...,{\bf v}_m$ from the file, the source finds a vector ${\mathbf u}=(u_1, ..., u_{m+l})\in{\mathbb F}_p^{m+l}$ orthogonal to all vectors in $V$. Specifically, the source finds a non-zero solution, ${\bf u}$, to the set of equations ${\bf v}_i\cdot {\bf u}=0$ for $i=1,...,m$.
\item The source computes the vector ${\mathbf x}=(u_1/\alpha_1, u_2/\alpha_2, ..., u_{m+l}/\alpha_{m+l})$.
\item The source signs ${\mathbf x}$ with some standard signature scheme and publishes ${\bf x}$. We refer to the vector ${\bf x}$ as the signature of the file being distributed.
\item The client node verifies that ${\bf x}$ is signed by the source.
\item When a node receives a vector ${\mathbf w}$ and wants to verify that ${\mathbf w}$ is in $V$, it computes
\[
d=\prod _{i=1}^{m+l}h_i^{x_iw_i},
\]
and verifies that $d=1$.
\end{enumerate}

To see that $d$ is equal to 1 for any valid ${\mathbf w}$, we have
\[
d = \prod _{i=1}^{m+l}h_i^{x_iw_i}
  = \prod_{i=1}^{m+l}(g^{\alpha_i})^{u_iw_i/\alpha_i}
  = \prod_{i=1}^{m+l}g^{u_iw_i}
  = g^{\sum_{i=1}^{m+l}(u_iw_i)} = 1,
\]
where the last equality comes from the fact that ${\mathbf u}$ is orthogonal to all vectors in $V$.

Next, we show that the system described above is secure. In essence, the theorem below shows that given a set of vectors that satisfy the signature verification criterion, it is provably as hard as the Discrete Logarithm problem to find new vectors that also satisfy the verification criterion other than those that are in the linear span of the vectors already known.

\begin{definition}
Let $p$ be a prime number and $G$ be a multiplicative cyclic group of order $p$. Let $k$ and $n$ be two integers such that $k<n$, and ${\bf \Gamma}=\{h_1, ...,h_n\}$ be a set of generators of $G$. Given a linear subspace, $V$, of rank $k$ in ${\mathbb F}_p^n$ such that for every ${\mathbf v}\in V$, the equality ${\bf \Gamma} ^{\mathbf v}\triangleq \prod _{i=1}^{n}h_i^{v_i} = 1$ holds, we define the $(p,k,n)$-Diffie-Hellman problem as the problem of finding a vector ${\mathbf w}\in {\mathbb F}_p^n$ with ${\bf \Gamma}^{\mathbf w}=1$ but ${\mathbf w}\notin V$.
\end{definition}

By this definition, the problem of finding an invalid vector that satisfies our signature verification criterion is a $(p,m,m+l)$-Diffie-Hellman problem. Note that in general, the $(p,n-1,n)$-Diffie-Hellman problem has no solution. This is because if $V$ has rank $n-1$ and a ${\bf w'}$ exists such that ${\bf \Gamma}^{\mathbf w'}=1$ and ${\mathbf w'}\notin V$, then $w'+V$ spans the whole space, and any vector ${\bf w}\in {\mathbb F}_p^n$ would satisfy ${\bf \Gamma}^{\mathbf w}=1$. This is clearly not true, therefore, no such ${\bf w'}$ exists.

\begin{theorem}
For any $k<n-1$, the $(p,k,n)$-Diffie-Hellman problem is as hard as the Discrete Logarithm problem.
\end{theorem}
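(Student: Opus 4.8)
The plan is to show the two problems are polynomial-time equivalent, the reduction from Discrete Logarithm (DL) to the $(p,k,n)$-Diffie-Hellman problem being the substantive half (it is this half that backs the security claim for the signature scheme). Fix a generator $g$ of $G$ and identify a group element $h$ with $\log_g h\in{\mathbb F}_p$; since $|G|=p$ is prime, $h$ is a generator iff $h\neq 1$. I will use two standard facts about the non-degenerate bilinear form $\langle v,w\rangle=\sum_i v_iw_i$ on ${\mathbb F}_p^n$: that $\dim V+\dim V^{\perp}=n$ and $(V^{\perp})^{\perp}=V$ for every subspace $V$; and that for any valid instance $(\Gamma,V)$ the log-vector $c=(\log_g h_1,\dots,\log_g h_n)$ lies in $V^{\perp}$, since $\Gamma^{v}=g^{\langle c,v\rangle}=1$ for all $v\in V$.

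For the direction ``$(p,k,n)$-DH reduces to DL'': given a valid instance I would recover $c$ coordinate-by-coordinate with the DL oracle, then solve by linear algebra for any $w\in c^{\perp}\setminus V$; this $w$ is a valid output because $\Gamma^{w}=g^{\langle c,w\rangle}=1$ and $w\notin V$. Such a $w$ exists precisely because $c\neq 0$ gives $\dim c^{\perp}=n-1>k=\dim V$ --- this is exactly where the hypothesis $k<n-1$ enters, consistently with the remark just before the theorem that the borderline case $k=n-1$ has no solution.

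For the harder direction ``DL reduces to $(p,k,n)$-DH'', let $\mathcal O$ be a solver for the $(p,k,n)$-DH problem and let $(g,y=g^{a})$ be a DL challenge (if $y=1$ then $a=0$ trivially, so assume $y\neq 1$). First I would fix a rank-$k$ subspace $V\subseteq{\mathbb F}_p^n$ containing no standard basis vector $e_i$ --- a random $V$ works, with probability at least $1-n/p$ --- and note $\dim V^{\perp}=n-k\ge 2$, again using $k<n-1$. Next, draw $c^{(1)},c^{(2)}$ independently and uniformly from $V^{\perp}$ and set $h_i=g^{c^{(1)}_i}\,y^{c^{(2)}_i}$, so that the log-vector is $c=c^{(1)}+a\,c^{(2)}\in V^{\perp}$. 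I would argue $(\{h_i\},V)$ is a valid instance with overwhelming probability: $V$ has rank $k$; $\Gamma^{v}=g^{\langle c,v\rangle}=1$ for $v\in V$ since $c\in V^{\perp}$; and $c$ is uniform on $V^{\perp}$, so $e_i\notin V$ forces $\Pr[c_i=0]=1/p$ and a union bound makes every $h_i\neq 1$. Feeding this to $\mathcal O$ yields $w\notin V$ with $\Gamma^{w}=1$, i.e.\ $\langle c,w\rangle=0$, i.e.\ $\langle c^{(1)},w\rangle+a\langle c^{(2)},w\rangle=0$; hence $a=-\langle c^{(1)},w\rangle/\langle c^{(2)},w\rangle$ as soon as $\langle c^{(2)},w\rangle\neq 0$.

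The hard part is guaranteeing that last non-vanishing, since $\mathcal O$ only promises $w\notin V$ and --- when $k<n-2$ --- many solutions $w$ are ``useless''. The trick I would use is a change of variables: $(c^{(1)},c^{(2)})\mapsto(c,c^{(2)})$ is a bijection of $V^{\perp}\times V^{\perp}$, so $c$ and $c^{(2)}$ are independent, each uniform on $V^{\perp}$. Since $h_i=g^{c_i}$, the oracle's input --- and therefore its output $w$ --- is a function of $(c,V)$ only; conditioning on $c$, the vector $w$ is fixed while $c^{(2)}$ stays uniform on $V^{\perp}$. Because $w\notin V=(V^{\perp})^{\perp}$, the functional $z\mapsto\langle z,w\rangle$ is non-zero on $V^{\perp}$, so its kernel has index $p$ and $\Pr[\langle c^{(2)},w\rangle=0]=1/p$. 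Hence the reduction outputs the correct $a$ except with probability $O(n/p)$, negligible for the large primes the scheme uses (and made smaller by independent repetitions that re-randomize $c^{(1)},c^{(2)}$, or by the random self-reducibility of DL). Together with the first reduction this gives the claimed equivalence; I would close by recalling that the case $k=n-1$ genuinely fails, so the hypothesis is tight.
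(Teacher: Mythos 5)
Your reduction is correct and is essentially the paper's own argument: embed the DL challenge $y=g^{a}$ into the exponents of the $h_i$ via two random vectors (your $c^{(1)},c^{(2)}$ play the roles of the paper's $\mathbf{s},\mathbf{r}$, sampled directly from $V^{\perp}$ rather than from all of ${\mathbb F}_p^n$ with $V$ then built inside $\{\mathbf{r},\mathbf{s}\}^{\perp}$), and recover $a$ from the linear relation $\langle c,w\rangle=0$ satisfied by the oracle's output $w\notin V$. Your conditioning argument showing $\Pr[\langle c^{(2)},w\rangle=0]=1/p$ is a more careful rendering of the paper's one-line claim that $\mathbf{r}$ is statistically independent of $x\mathbf{r}+\mathbf{s}$, and the check that every $h_i$ is a generator and the easy converse reduction are additions the paper omits.
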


\begin{proof} Assume there exists an efficient algorithm to solve the $(p,k,n)$-Diffie-Hellman problem, and we wish to compute the discrete logarithm $\log_g(z)$ for some $z=g^x$, where $g$ is a generator of a cyclic group $G$ with order $p$. We can choose two random vectors ${\mathbf r}=(r_1, ..., r_n)$ and ${\mathbf s}=(s_1,...,s_n)$ in ${\mathbb F}_p^n$, and construct ${\bf \Gamma} = \{h_1,...,h_n\}$, where $h_i=z^{r_i}g^{s_i}$ for $i=1,...,n$. We then find $k$ linearly independent (and otherwise random) solutions ${\mathbf v}_1,...,{\mathbf v}_k$ to the equations
\[
{\mathbf v}\cdot {\mathbf r}=0 \; \; \mbox{and} \; \; {\mathbf v}\cdot {\mathbf s}=0.
\]
Note that there exist $n-2$ linearly independent vector solutions to the above equations. Let $V$ be the linear span of $\{{\mathbf v}_1,...,{\mathbf v}_k\}$, then any vector ${\mathbf v}\in V$ satisfies ${\bf \Gamma} ^{\mathbf v}=1$. Now, if we have an algorithm for the $(p,k,n)$-Diffie-Hellman problem, we can find a vector ${\mathbf w}\notin V$ such that ${\bf \Gamma}^{\mathbf w}=1$. This vector would satisfy ${\mathbf w}\cdot (x{\mathbf r}+{\mathbf s})=0$. Since ${\mathbf r}$ is statistically independent from $(x{\mathbf r}+{\mathbf s})$, with probability greater than $1-1/p$, we have ${\mathbf w}\cdot {\mathbf r}\ne 0$. In this case, we can compute
\[
\log_g(z)=x=\frac{{\mathbf w}\cdot {\mathbf s}}{{\mathbf w}\cdot {\mathbf r}}.
\]
This means the ability to solve the $(p,k,n)$-Diffie-Hellman problem implies the ability to solve the Discrete Logarithm problem.
\end{proof}

This proof is an adaptation of a proof in an earlier publication by Boneh {\em et. al} \cite{bf99}.

Our signature scheme makes use of the linearity property of RLNC, and enables the nodes to check the integrity of packets without a secure channel, unlike the homomorphic hash function or SRC schemes \cite{homomorphic1, homomorphic2}. In addition, our scheme does not require the nodes to decode coded packets to check their validity -- thus, is efficient in terms of delay. The computation involved in the signature generation and verification processes is very simple. Furthermore, our scheme uses the Discrete Logarithm problem, which is more standardized and widely used, compared to the recently developed Weil pairing problem used in \cite{elliptic}. Lastly, we note that our signature scheme is rateless, which is not the case in end-to-end or generation based detection schemes. 

\section{Overhead analysis}\label{sect:overhead}



In the previous Sections, we showed that our signature scheme is beneficial, as even a small amount of attack can have a devastating effect in coded networks. However, we have not shown that this scheme is efficient in terms of bandwidth (\ie overhead of augmenting the signature scheme), and indeed, it is not always the case that our signature scheme is desirable. We now study the cost and benefit of the following three Byzantine schemes: 1) our signature scheme proposed in Section \ref{sect:scheme}, 2) end-to-end error correction scheme \cite{resilient}, and 3) generation-based Byzantine detection scheme \cite{reliable}. If we implement Byzantine detection schemes, we can detect contaminated data, drop them, and therefore, only transmit valid data; however, this benefit comes with the overhead of the schemes in the forms of hashes and signatures. It is important to note that, for the dropped data, the receivers perform erasure correction, which is computationally lighter than error correction; thus, there is no need of retransmissions.

We consider a node $\node \in \mathcal{N}$ in the network as in Section \ref{sect:scheme}. Node $\node$ wishes to check the validity of the data it forwards. Assume that node $\node$ receives $M$ packets per time slot. Recall that $m$ is the number of packets in a file and $l$ is the length of each packet, therefore, each packet consists of $(m+l)$ symbols. If $\node$ detects an error, then it discards that data; otherwise, it forwards the data. The probability that $\node$ receives a contaminated packet is $p_n$ as shown in Figure \ref{fig:trustednode}. Note that the probability $p_n$ of an attack is topology dependent. However, in order to compare the performance of various schemes, we use a generic per node model to examine the overhead incurred at a node. We assume that there is an external model of vulnerability which gives an estimate of $p_n$. Note that the blocking probability $\Psi$ analyzed in Section \ref{sect:impact} provides such an estimate.

\begin{figure}[tbp]
\centering
\hspace*{-.7cm}
\includegraphics[width=0.7\textwidth]{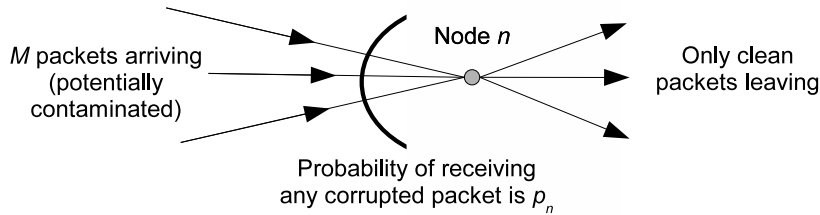} \caption{Diagram of a node $\node$ in a network}\label{fig:trustednode}
\vspace*{-.5cm}
\end{figure}

\subsection{Overhead analysis of our packet-based signature scheme}\label{sect:performance-Ours}

We examine the overhead incurred by our signature scheme. Recall from Section \ref{sect:scheme}, the file size is $ml\log{p}$ bits. The file is divided into $m$ packets, each of which is a vector in $\mathbb{F}_p^l$. Thus, the overhead of the RLNC scheme is $m/l$ times the file size, and in practical networks $m\ll l$.

The initial setup of our signature scheme involves the publishing of the public key, ${\bf K}_{p}$, which is $(m+l)\log(q)$ bits. In typical cryptographic applications, the sizes of $p$ and $q$ are 20 bytes (160 bits) and 128 bytes (1024 bits), respectively; thus, the size of ${\bf K}_{p}$ is approximately $6(m+l)/ml$ times the file size. This overhead is negligible as long as $6 \ll m\ll n$. For example, if we have a file of size 10MB, divided into $m=100$ packets, then the overhead is approximately 6\%. We note that the public key ${\bf K}_p$ cannot be fully reused for multiple files, as it is possible for a malicious node to generate a vector which is not a valid linear combination of the original vectors yet satisfies the check $d=1$ using information obtained from previously downloaded files. We do not provide the details of this for want of space.

To prevent this from happening, we can redistribute keys for each additional file in one of the two methods below. The first method consists of publishing a new public key ${\bf K}_p$ for each file, which would incur an overhead of $6(m+l)/ml$ times the file size. Note that if we republish ${\bf K}_p$ for every file, we can reuse the signature ${\bf x}$. The second method is to update  ${\bf K}_p$ partially and generate a new ${\bf x}$ for each file. This incurs less overhead than the previous method, however, requires a high variability in ${\bf w}$ for it to be secure. This update incurs negligible amount of overhead as well. For example, for a 10MB file, the overhead is less than 0.1\%.

The initial ${\bf K}_p$ distribution costs approximately 6\% of our file size, and the incremental update of ${\bf K}_p$ and ${\bf x}$ is much less than 6\% if we use the second method. Therefore, we shall denote the overhead associated with our signature by $o_p \triangleq \frac{6}{100}(m+l)$ symbols per packet, \ie 6\% overhead.

If $n$ detects an error in a packet, then it discards it -- by doing so, $n$ can filter out all the contaminated packets and use its bandwidth to transmit only valid packets. Therefore, $n$ only forwards on average $1 - \frac{o_p}{m+l}$ fraction of the data received.


Our signature scheme costs $o_p M$ symbols per time slot. However, by discarding the contaminated packets, node $\node$ can on average save its bandwidth by $M(m+l)p_n$ symbols per time slot. Therefore, the net cost of the signature scheme as a fraction of the total data received is:
\begin{equation}\label{eq:packet}
\frac{\max\{0, Mo_p - M(m+l)p_n)\}}{M(m+l)} = \frac{\max\{0, o_p - (m+l)p_n\}}{m+l}.
\end{equation}

When $p_n$ is high, then checking each packet for error saves on bandwidth -- \ie $(o_p - (m+l)p_n) < 0$, which shows that the cost of the signature scheme is canceled by the bandwidth gained from dropping the corrupted packets. Therefore, this approach is the most sensible when the network is unreliable or under heavy attack.

\subsection{Overhead analysis of end-to-end error correction}\label{sect:performance-E2E}

In this subsection, we shall use the rate-optimal error correction codes from Jaggi \etal \cite{resilient}. As long as the attack is within the network capacity, this scheme allows the intermediate nodes to transmit at the remaining network capacity, \ie the end-to-end network capacity minus the capacity the adversary can contaminate. In this scenario, node $n$ just naively performs RLNC and forwards the data it has received. Therefore, node $n$ transmits on average $M(m+l)p_n$ contaminated symbols. Thus, the net cost as a fraction of the total data received is:
\begin{equation}\label{eq:errorcorrection}
\frac{M(m+l)p_n}{M(m+l)} = p_n.
\end{equation}


\subsection{Overhead analysis of generation-based Byzantine detection scheme}\label{sect:performance-generation}

We now analyze the performance of the algorithm proposed by Ho \etal \cite{detection}, which uses random block linear network coding with generation size $G$ (although we have focused on RLNC so far, it is possible to extend these results by considering $m$ as the generation size $G$). This scheme is very cheap -- with 2\% overhead, the detection probability is at least 98.9\%. We denote the overhead associated with this scheme by $o_g \triangleq \frac{2}{100}(m+l)G$ symbols per generation.

After collecting enough packets from the generation, node $n$ checks for possible error in the generation, which can incur large delay. If $n$ detects an error, it discards the entire generation of $G$ packets; otherwise, it forwards the data. This scheme requires only one hash for the entire generation -- saving bits on the hashes compared to our signature scheme. However, it can be inefficient, as one contaminated packet can cause $n$ to discard an entire generation.

The probability $p_g$ of dropping a generation of $G$ packets is given by:
\[p_g = 1 - \Pr(\text{All $G$ packets are valid})= 1- (1-p_n)^G.\]


The cost and benefit of this scheme includes three components: (i) the hash of $o_g$ symbols per generation, (ii) valid packets which are discarded if the generation is deemed contaminated, and (iii) bandwidth saved by dropping contaminated packets. The expected number of valid symbols dropped per generation is $p_g(1-p_n)(m+l)G$. The expected number of contaminated symbols per generation is $p_n (m+l)G$. Thus, the net cost as a fraction of the total data received is:
\begin{equation}\label{eq:generation}
\frac{\max\{0, o_g + p_g(1-p_n)(m+l)G - p_n(m+l)G\}}{(m+l)G}.
\end{equation}

For this scheme to work, $n$ needs to receive at least $G$ packets from each generation to decode and detect errors. This may seem to indicate that this scheme is only applicable as an end-to-end scheme, but it can be extended to a \emph{local} Byzantine detection scheme as shown in Figure \ref{fig:generationsplit}.

\begin{figure}[tbp]
\begin{center}\hspace*{-.5cm}
\includegraphics[width=0.75\textwidth]{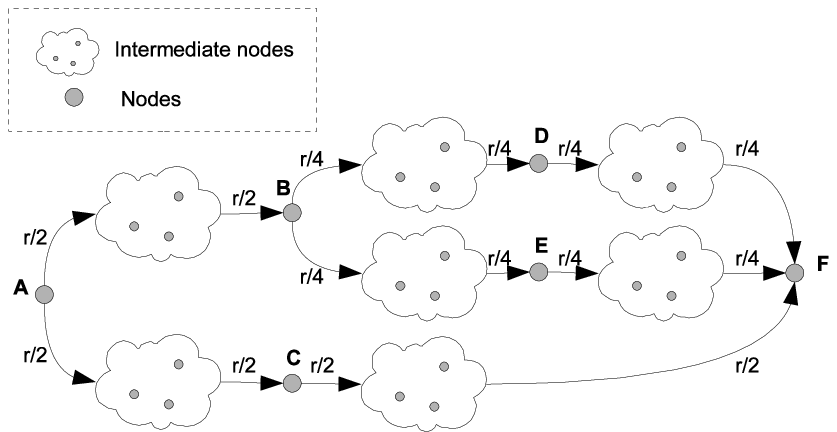}\end{center} \vspace*{-.5cm}\caption{Network with non-malicious nodes $\mathbf{A}$, $\mathbf{B}$, $\mathbf{C}$, $\mathbf{D}$, $\mathbf{E}$, and $\mathbf{F}$ where node $\mathbf{A}$ is transmitting at a total rate of $\mathbf{r}$ to node $\mathbf{F}$; however, $\mathbf{A}$ sends half of its data through $\mathbf{B}$ and the other half through $\mathbf{C}$. Therefore, $\mathbf{B}$ and $\mathbf{C}$ can check the validity of the \emph{sub-generation} they receive, where by sub-generation, we mean a collection of $G/2$ encoded packets from $\mathbf{A}$. By a similar argument, $\mathbf{D}$, $\mathbf{E}$, and $\mathbf{F}$ can check the validity of a sub-generation of $G/4$, $G/4$, and $G$ packets from $\mathbf{A}$, respectively. }
\label{fig:generationsplit}
\vspace*{-.5cm}
\end{figure}


The cost of the generation-based scheme increases dramatically with $G$. If $G$ is large enough, the probability of at least one corrupted packet in a generation is high even for small $p_n$. Thus, a large $G$ is undesirable, as almost every generation is found faulty and dropped, making the throughput go to zero. This can be verified with an asymptotic analysis of Equation \ref{eq:generation}:
\begin{equation*}
\lim_{G\rightarrow \infty} \frac{\max\{0, o_g + p_g(1-p_n)(m+l)G - p_n(m+l)G\}}{(m+l)G} \rightarrow \max\{0,1-2p_n\}.
\end{equation*}

Note in Figure \ref{fig:generations} that the cost peaks at $p_n \approx 0.2$. At $p_n \approx 0.2$, the scheme drops many generations for a few corrupted packets. Thus, at a moderate rate of attack, the generation-based scheme suffers. When $p_n < 0.2$, the generation-based scheme does well, since $p_n$ is low and the cost of hash is distributed across $G$ packets. As $p_n$ increases to 0.5 from 0.2, the throughput to the receiver decreases as more generations are dropped. When $p > 0.5$, this scheme discards almost all generations, thus, the expected throughput is near zero.

\begin{figure}[tbp]
\begin{center}\hspace*{-.5cm}
\includegraphics[width=0.7\textwidth]{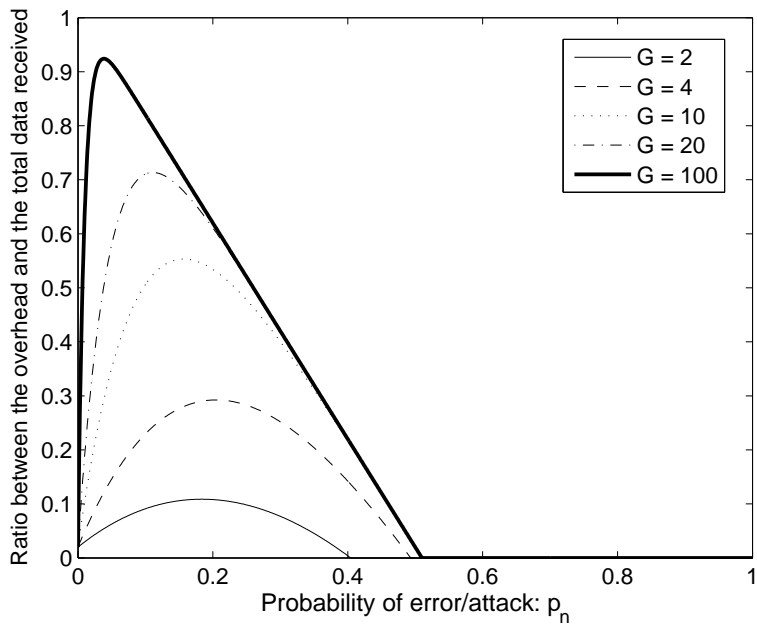}\end{center}\vspace*{-.5cm} \caption{Ratio between the expected overhead and the total data received by a node for generation-based detection with generation size $G$, packet size $1000$ bits, and hash size $o_g = \frac{2}{100}(m+l)G$ symbols per generation}\label{fig:generations}
\vspace*{-.5cm}
\end{figure}

\subsection{Trade-offs and comparisons}\label{sect:comparison}




In Figures \ref{fig:comparison} and \ref{fig:blownup}, we compare the three schemes. As mentioned in Section \ref{sect:performance-E2E}, the expected cost of error correction scheme is linearly proportional to $p_n$. Therefore, for large $p_n$, this scheme performs badly. However, this simple scheme where a node naively forwards all data it receives outperforms the detection schemes when $p_n$ is low ($p_n <0.03$). When $p_n$ is small, the overhead of detection exceeds the cost introduced by the attackers.

When $p_n$ is low, the overhead of our signature is costly, since we are devoting $o_p$ symbols per packet to detect an unlikely attack. In such a setting, the generation-based scheme performs well, as it distributes the cost of the hash ($o_g$ symbols) over $G$ packets. However, as $p_n$ increases, the cost of our signature becomes negligible since the bandwidth wasted by contaminated packets increases; thus, our signature scheme outperforms the generation-based scheme. However, it is important to note that we underestimate the overhead associated with our signature scheme in this paper as we do not take into account the public key distribution cost, which the generation-based scheme does not require. Thus, depending on the public key distribution infrastructure used and the frequency of key renewal, our scheme will incur a higher overhead -- resulting in an outward shift in the overhead in Figure \ref{fig:comparison}.

We briefly note the computational cost of implementing these schemes. When using our signature scheme or the generation-based detection scheme, node $n$ does not waste its bandwidth in transmitting contaminated data by dropping a single packet or an entire generation. Furthermore, there is no need of retransmission of the dropped data as the receivers can perform erasure correction on the packets or the generations that have been dropped. It is important to note that for the end-to-end error correction scheme, the receivers need to perform error correction, which is computationally more expensive than erasure correction.

\begin{figure}[tbp]
\begin{center}\hspace*{-.5cm}
\includegraphics[width=0.8\textwidth]{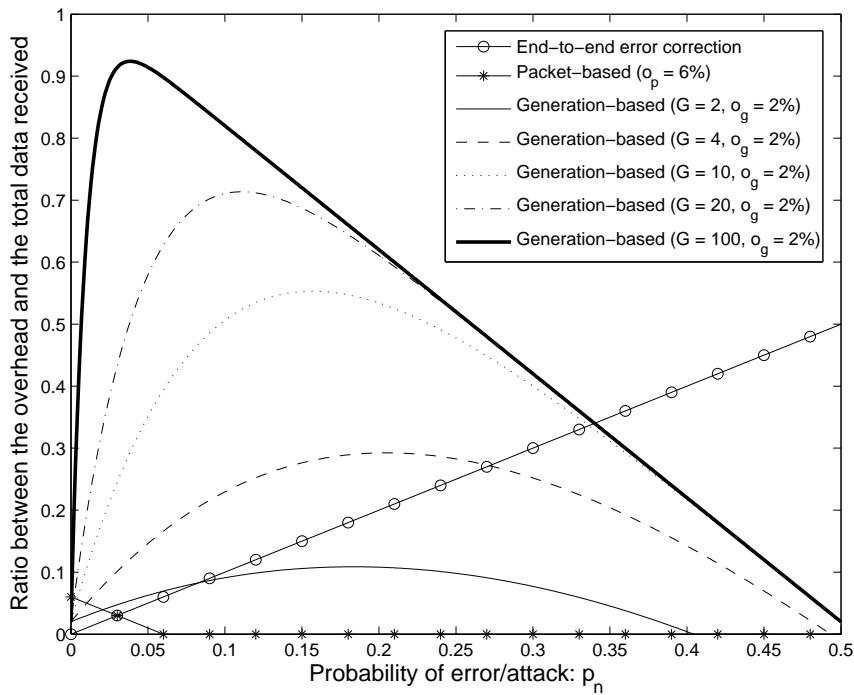}\end{center}\vspace*{-.5cm} \caption{Ratio between the expected overhead and the total data received by a node with $o_p = \frac{6}{100}(m+l)$, $o_g= \frac{2}{100}(m+l)G$}
\label{fig:comparison}
\vspace*{-.6cm}
\end{figure}

\begin{figure}[tbp]
\begin{center}\hspace*{-.5cm}
\includegraphics[width=0.8\textwidth]{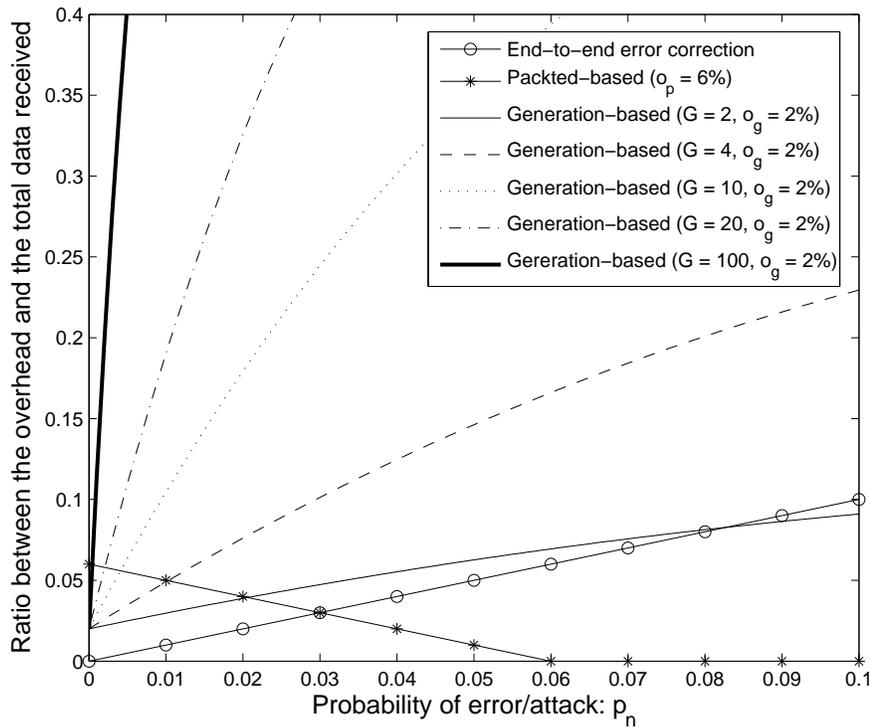}\end{center}\vspace*{-.5cm} \caption{Ratio between the expected overhead and the total data received by a node with $o_p = \frac{6}{100}(m+l)$, $o_g= \frac{2}{100}(m+l)G$ for $p_n \in [0, 0.1]$}
\label{fig:blownup}
\vspace*{-.6cm}
\end{figure}

\section{Conclusions}\label{sect:conclusions}
In this paper, we studied the problem of Byzantine attacks in network coded P2P networks. We used randomly evolving graphs to characterize the impact of Byzantine attackers on the receiver's ability to recover a file. As shown by our analysis, even a small number of attackers can contaminate most of the flow to the receivers. Motivated by this result, we proposed a novel signature scheme for any network using RLNC. The scheme makes use of the linearity of the code, and it can be used to easily check the validity of all received packets. Using this scheme, we can prevent the intermediate nodes from spreading the contamination by allowing nodes to detect contaminated data, drop them, and therefore, only transmit valid data. We emphasize that there is no need of retransmission for the dropped data since the receivers can perform erasure correction, which is computationally cheaper than error correction.

We analyzed the cost and benefit of the signature scheme, and compared it with the end-to-end error correction scheme and the generation-based detection scheme. We showed that the overhead associated with our scheme is low. Furthermore, when the probability of Byzantine attack is high, it is the most bandwidth efficient. However, if the probability of attack is low, generation-based Byzantine detection schemes are more appropriate. 

\bibliographystyle{IEEEtran}
\bibliography{References}
\end{document}